\DeclareMathAlphabet{\mathcal}{OMS}{cmsy}{m}{n}
\DeclareMathAlphabet{\mathpzc}{OT1}{pzc}{m}{it}
\DeclarePairedDelimiter\abs{\lvert}{\rvert}%
\DeclarePairedDelimiter\norm{\lVert}{\rVert}%
\let\oldabs\abs
\def\abs{\@ifstar{\oldabs}{\oldabs*}}
\let\oldnorm\norm
\def\norm{\@ifstar{\oldnorm}{\oldnorm*}}
\newcommand\by[1]{\mathbf{r}_{#1}}
\newcommand\bn[1]{\mathbf{n}_{#1}}
\newcommand\bx[1]{\mathbf{u}_{#1}}
\newcommand{\alp}[2]{%
\ifthenelse{\equal{#1}{}}{\ifthenelse{\equal{#2}{}}{\alpha}{\alpha^{(#2)}}}{\ifthenelse{\equal{#2}{}}{\alpha_{#1}}{\alpha^{(#2)}_{#1}}}}
\newcommand{\alpi}[1]{%
\ifthenelse{\equal{#1}{}}{\innov{\alpv{}}}{\innov{\alpha}_{#1}}}
\newcommand{\aln}[1]{%
\ifthenelse{\equal{#1}{}}{\bar{\alpha}}{{\bar{\alpha}}^{(#1)}}}
\newcommand{\alpv}[1]{%
\ifthenelse{\equal{#1}{}}{ \pmb{\alpha}}{{\pmb{\alpha}}_{#1}}}
\newcommand{\alh}[1]{%
\ifthenelse{\equal{#1}{}}{ {\mathbf{a}}}{{{a}}_{#1}}}
\newcommand{\pauliV}[1]{%
\ifthenelse{\equal{#1}{}}{ {\vec{\pmb{\sigma}}}}{{\pmb{\sigma}}_{#1}}}
\newcommand\bR[1]{\mathbf{J}_{#1}}
\newcommand\bRf{J}
\newcommand\bRR[1]{\mathbf{R}_{#1}}
\newcommand\bRRf{R}
\newcommand{\bb}[1]{%
\ifthenelse{\equal{#1}{}}{ {\vec{\pmb{\rho}}}}{{\pmb{\rho}}_{#1}}}
\newcommand\bbb[1]{\pmb{\lambda}_{#1}}
\newcommand{\bRl}[1]{%
\ifthenelse{\equal{#1}{}}{\mathbf{R_l}}{\mathbf{R_l}_{#1}}%
}
\newcommand{\bRr}[1]{%
\ifthenelse{\equal{#1}{}}{\mathbf{R_r}}{\mathbf{R_r}_{#1}}%
}
\newcommand{\bRin}[1]{%
\ifthenelse{\equal{#1}{}}{{\bRf(\innov{\alpv{}})}}{{\bRf(\innov{\alpv{}}_{#1})}}}
\newcommand{\bRRin}[1]{%
\ifthenelse{\equal{#1}{}}{\bRRf(\innov{\alpv{}})}{\bRRf(\btin{k}, \innov{\alpv{}}_{#1})}}
\newcommand{\bRrin}[1]{%
\ifthenelse{\equal{#1}{}}{\innov{\bRr{}}}{\innov{\bRr{}}_{#1}}}
\newcommand{\bRlin}[1]{%
\ifthenelse{\equal{#1}{}}{\innov{\bRl{}}}{\innov{\bRl{}}_{#1}}}
\newcommand{\pn}[1]{%
\ifthenelse{\equal{#1}{}}{e^{i\phi}}{e^{i\phi_{#1}}}}
\newcommand{\pnin}[1]{%
\ifthenelse{\equal{#1}{}}{e^{i\innov{\phi}}}{e^{i\innov{\phi}_{#1}}}}
\newcommand{\bt}[1]{%
\ifthenelse{\equal{#1}{}}{\phi}{\phi_{#1}}}
\newcommand{\btin}[1]{%
\ifthenelse{\equal{#1}{}}{\innov{\phi}}{\innov{\phi}_{#1}}}
\newcommand\bM[1]{\mathbf{M}_{#1}}
\newcommand\bMf{M}
\newcommand{\bMin}[1]{%
\ifthenelse{\equal{#1}{}}{{\bMf({\innov{\alpv{}}})}}{{\bMf(\innov{\alpv{}}_{#1})}}}
\newcommand\bS[1]{\mathbf{s}_{#1}}
\newcommand{\autocorr}[2]{%
\ifthenelse{\equal{#1}{}}{\mathcal{A}_{#2}}{\mathcal{A}_{#2}{#1}}%
}
\newcommand{\autocorrinv}[2]{%
\ifthenelse{\equal{#1}{}}{\mathcal{A}^{-1}_{#2}}{\mathcal{A}^{-1}_{#2}{#1}}%
}
\newcommand{\thet}[2]{%
\ifthenelse{\equal{#1}{}}{\ifthenelse{\equal{#2}{}}{\theta}{\theta^{(#2)}}}{\ifthenelse{\equal{#2}{}}{\theta_{#1}}{\theta^{(#2)}_{#1}}}}
\newcommand\cpo[1]{\mathcal{K}({#1})}
\newcommand{\sbullet}{%
  \hbox{\fontfamily{lmr}\fontsize{.4\dimexpr(\f@size pt)}{0}\selectfont\textbullet}}
\newcommand\innov[1]{\accentset{\sbullet}{#1}}
\newcommand{\rmT}{\mathrm{T}} % For transpose
\newcommand{\rmH}{\mathrm{H}} % For conjugate transpose
\newcommand{\rmC}{\mathrm{*}} % For conjugate 
\newcommand{\df}{\Delta \nu}
\newcounter{defcounter}
\newenvironment{myequation}{%
\addtocounter{equation}{-1}
\refstepcounter{defcounter}

\begin{equation*}}
{\end{equation*}}
\title{Polarization Drift Channel Model for Coherent Fibre-Optic Systems}
\author[1,*]{Cristian~B.~Czegledi}
\author[2]{Magnus~Karlsson}
\author[1]{Erik~Agrell}
\author[2]{Pontus~Johannisson}
\affil[1]{Chalmers University of Technology, Department of Signals and Systems, SE-41296 Gothenburg, Sweden}
\affil[2]{Chalmers University of Technology, Department of Microtechnology and Nanoscience, SE-41296 Gothenburg, Sweden}
\affil[*]{czegledi@chalmers.se}
\begin{abstract}
A theoretical framework is introduced to model the dynamical changes of the state of polarization during transmission in coherent fibre-optic  systems. The model generalizes the one-dimensional phase noise random walk to higher dimensions, accounting for random polarization drifts, emulating a random walk on the Poincar\'e sphere, which has been successfully verified using experimental data. The model is described in the Jones, Stokes and real four-dimensional formalisms, and the mapping between them is derived. Such a model will be increasingly important in simulating and optimizing future   systems,  where polarization-multiplexed transmission and sophisticated digital signal processing will be natural parts.  The proposed polarization drift model is the first of its kind as prior work either models polarization drift as a deterministic process or focuses on polarization-mode dispersion in systems where the state of polarization does not affect the receiver performance. We expect the model to be  useful  in a wide-range of photonics applications where stochastic polarization fluctuation is an issue.
\end{abstract}
\begin{document}

\flushbottom
\maketitle

\thispagestyle{empty}

\section*{Introduction}
{Enabled} by digital signal processing, coherent detection {allows} spectrally efficient communication based on quadrature amplitude modulation formats, which carry information in both the intensity and {the} phase of the optical field, in both polarizations. Polarization-multiplexed  quadrature phase-shift keying has  recently {been} introduced for 100~Gb/s transmission per channel and it is expected that in  the near future, higher-order modulation formats will become a necessity to reach {even} higher data rates.  {However, the demultiplexing of the  polarization-multiplexed channels in the receiver requires knowledge of the state of polarization (SOP), which is drifting with time. This implies that the SOP must be tracked
 by a dynamic equalizer \cite{Kim2009,Savory2010}. As the SOP changes with time in a random fashion, it is qualitatively different from the chromatic dispersion, which can be compensated for using a static equalizer. A deterministic or static behaviour would be straightforward to resolve, but with a nondeterministic SOP drift, the equalizer must be continuously adjusted.}

In order to fully understand the impact of an impairment on the performance of a transmission system, a channel model is {required}, which should {describe} the behaviour of the channel as accurately as possible. Based on the statistical information that such models reveal,  insights into how to treat the impairments  optimally in order to maximize performance can be obtained and used as a result. On the other hand, a channel model that does not {describe} the fibre accurately  may hinder the achievement of optimal performance. Therefore, it is important that the channel  model matches the stochastic nature of the fibre closely.

Very few results on modelling of \emph{random} polarization drifts are present in the literature. {In the context of equalizer development}, several  models have indeed been suggested, but typically by either using a \emph{constant} randomly chosen {SOP}  \cite{Kikuchi2008,Liu2009,Roudas2010,Johannisson2011a} or by generating cyclic/quasi-cyclic \emph{deterministic} changes \cite{Heismann1995,Savory2008,Muga2014,Louchet2014}, which are usually nonuniform in their coverage of the  possible SOPs.
{There is, on the other hand, a wealth of statistical models that describe differential group delay (DGD) and polarization mode dispersion  dating back to the eighties\cite{Poole1986}. However, these results were typically developed to be applicable in systems using intensity modulation or single-polarization (differential) phase-shift keying formats, which are not affected by the SOP drift. Thus, instead of modelling the time evolution of the SOP, differential equations describing the SOP change with frequency and fibre length were typically given \cite{Foschini1991, Brodsky2006}. There are also some direct measurements of SOP changes, e.g., by Soliman et al. \cite{Soliman2010} However, in their measurement, fast SOP changes were induced in a dispersion-compensating module under laboratory conditions, without considering the SOP drift of an entire fibre link. It can be concluded that stochastic SOP drift has so far not been given much attention.}

This paper  suggests a model for random SOP drifts in the time domain by generalizing the  one-dimensional (1D) phase noise random walk to a higher dimension. The model is based on a succession of random Jones matrices, where each matrix is parameterized by three random  variables, chosen from a zero-mean Gaussian distribution with a variance set by a \textit{polarization linewidth} parameter. The latter determines the speed of the drift and depends on the system details. The polarization drift has a random walk behaviour, where each step is independent of the previous steps and equally likely in all directions. The model is given in the Jones, Stokes formalisms and in the more general real 4D formalism\cite{Betti1991,Karlsson2014}.

{As argued above, the suggested model serves a different purpose than the models of polarization mode dispersion existing in the literature. In the latter case, the fibre is viewed either as a concatenation of a small number of segments with relatively high DGD, leading to the \emph{hinge model}\cite{Kogelnik2005, Brodsky2006, Antonelli2006a}, or the limiting case with segments of infinitesimal lengths, leading to a Maxwellian distribution of the DGD \cite{Curti1990}. The most common assumption is then to model the hinges as independent random SOP rotators \cite{Brodsky2006, Antonelli2006a}. In the \emph{anisotropic hinge model}, the SOP variation at the hinges is modelled as generalized waveplates parameterized by one random angle per hinge \cite{Li2008OL}. The \emph{hybrid hinge model} is a further generalized way to model the SOP changes at the hinges \cite{Schuster2014}, but in none of these publications is the SOP time evolution discussed.}

The {suggested} model can be used in simulations  for a wide range of photonics applications, where stochastic polarization fluctuation is an issue that needs to be modelled. For example, a fibre-optic communication link can be simulated, independently of the modulated data as well as of other considered impairments, which can be useful to, e.g.,  characterize receivers' performance. Moreover, it can reveal  statistical knowledge of the received samples affected by polarization rotations, based on which the  existing tracking algorithms   can be  optimally tuned  or new, more powerful  algorithms can be designed. High-precision transfer and remote synchronization of microwave and/or radio frequency signals \cite{Jung2014} is another application that could benefit from a better understanding of how it is affected by polarization drifts, which is currently the limiting factor towards a better performance. Other applications such as fibre-optic sensors \cite{Song2007} have been developed for use in a broad range of applications, fibre-optic gyroscopes \cite{Wang2013} and quantum key distribution \cite{Dynes2012}  are   strongly affected by polarization fluctuations and may benefit from a better understanding of the transmission medium.

\section*{Present Phase and Polarization Drift Models}
The fibre propagation through a linear medium is often described by a complex  $2 \times 2$ Jones matrix, which, neglecting the nonlinear phenomena, relates the received optical field to the input. Another approach  is to use the Stokes--Mueller formalism, where the evolution of the Stokes vectors is modelled by a Mueller  $3 \times 3$ unitary matrix. The latter has the advantage that the Stokes vectors are experimentally observable quantities and can be easily visualized as points on a  3D sphere, called the  \emph{Poincar\'e sphere}. A further approach to describe the SOP rotations exists in the 4D Euclidean space \cite{Betti1991}, where the wave propagation can be described  by a $4\times 4$ real unitary matrix \cite{Karlsson2014}.  We will focus most of our discussion on the Jones formalism and connect it to the Stokes and real 4D formalisms later on.  

An optical signal has two quadratures in two polarizations and can be described by a Jones vector as a function of  time $t$ and the propagation distance  $z$ as  $ \mathbf{E}(z,t) = (E_{x}(z,t) ,E_{y}(z,t))^\rmT$, where each element combines the real and imaginary parts of the electrical field in the $x$ and $y$ field components and $(\cdot)^{\rmT}$ denotes transposition. The $k$th discrete-time input into a transmission medium can be written as
$ \bx{k} = \mathbf{E}(0,kT)$,
where $T$  is the sampling time and it is commonly related to the symbol interval. The received discrete signal, at distance $L$, is 
$\by{k} = \mathbf{E}(L,kT)$.

The propagation of the optical field can be described by a  $2\times2$ complex-valued Jones matrix $\bR{k}$, which relates the received optical  field $\by{k}\in\mathbb{C}^2$, in the presence of optical amplifier noise and laser phase noise,  to the input $\bx{k}\in\mathbb{C}^2$ as
\begin{equation}\label{eq:sys_mod}
	\by{k}= \pn{k}\bR{k} \bx{k}+\bn{k},
\end{equation}
where $i=\sqrt{-1}$,  $\bt{k}$ models the carrier phase noise and $\bn{k}\in\mathbb{C}^2$ denotes the additive noise, which is represented by two independent complex circular zero-mean Gaussian random variables.  Assuming that  {polarization-dependent losses  are negligible,   the channel  matrix  $\bR{k}$ can be modelled as a unitary  matrix, which preserves the input power  during propagation. In this work, we assume that the chromatic dispersion has been compensated for and  polarization mode dispersion   is  negligible. The  transformation $\bR{k}$ can be described using} the matrix function
$\bRf(\alpv{k}{})$, which is  expressed using the \emph{matrix exponential} \cite[p.~165]{Bellman1960} parameterized by three variables $\alpv{}=(\alp{1}{}, \alp{2}{},\alp{3}{})$ according to {\cite{Frigo1986,Gordon2000}}  
\begin{equation}\label{eq_rot}
\begin{split}
\bRf(\alpv{}{}) &= \exp(-i\alpv{}\cdot\pauliV{}) \\
                 & = \mathbf{I}_2 \cos(\thet{}{}) - i \alh{}\cdot\pauliV{}\sin(\thet{}{}),
\end{split}
\end{equation}
where   $\pauliV{}=(\pauliV{1},\pauliV{2},\pauliV{3})$ is a tensor of the Pauli spin matrices \cite{Gordon2000}
{
\begin{equation}
\pauliV{1} = 
\begin{pmatrix}
  1 & 0 \\
  0&-1\\
\end{pmatrix};
\qquad
\pauliV{2} = 
\begin{pmatrix}
  0 & 1 \\
 1&0\\
\end{pmatrix};
\qquad
\pauliV{3} = 
\begin{pmatrix}
  0 & -i \\
  i &  0\\
\end{pmatrix}.
\end{equation}
This notation of $\pauliV{i}$  complies with the definition of the Stokes vector, and it is different from  the notation  introduced by Frigo\cite{Frigo1986}.} The operation $\alpv{}\cdot\pauliV{}$  should be interpreted as a scaled linear combination of the three matrices  $\pauliV{1},\pauliV{2},\pauliV{3}$, and $\mathbf{I}_2$ is the $2\times2$ identity matrix.
{In general, a $2\times2$ complex unitary matrix has four  degrees of freedom (DOFs), but in this case we explicitly factored out the phase noise. Including the identity matrix in $\pauliV{}$ would make it possible to account for all four DOFs.}
The vector $\alpv{}$ can be expressed  as a product of two independent random variables  $\alpv{}=\thet{}{}\alh{}$, i.e., its length $\thet{}{} = \norm*{\alpv{}}$ in the interval $[0,  \pi)$ and the  unit vector $\alh{} = (\alh{1}{}, \alh{2}{},\alh{3}{})$, which represents its direction on the unit sphere. 
Since $\bR{k}$ is unitary, the inverse can be found by the conjugate transpose operation or by negating  $\alpv{k}$, since
$
  \bR{k}^{-1} = \bR{k}^\rmH=  \bRf(-\alpv{k}),
$
. The same principle holds for the phase noise
$
  {(\pn{k})}^{-1}={(\pn{k})}^\rmH=e^{-i\bt{k}}.
$

{Modern coherent systems require a local oscillator at the receiver in order to get access to both phase and amplitude of the electrical field. The local oscillator serves as a reference but it is not synchronized with the transmitter laser, resulting in phase noise, which creates the need for  carrier synchronization.  The phase noise} is modelled by the angle $\bt{}$, while  $\alp{1}{},\alp{2}{} \text{ and } \alp{3}{}$ model random fluctuations of the SOP caused by fibre birefringence and coupling.
Both phase noise and SOP drift are dynamical processes that change randomly over time. The  SOP drift time can vary from microseconds up to seconds, depending on the link type. It is usually much longer than the phase drift time, which is in the microsecond range for modern coherent systems \cite{Simon1977}.

The update rule of the phase noise follows a Wiener process \cite{Pfau2009,Tur1985,Ip2007}
\begin{equation}\label{eq:ph_wiener}
  \bt{k}=\btin{k}+\bt{k-1},
\end{equation}
where $\btin{k}$ is the \textit{innovation} of the phase noise. An alternative {form} of equation (\ref{eq:ph_wiener}) is
\begin{equation}\label{eq:ph_update}
\pn{k}=\pnin{k}\pn{k-1},
\end{equation}
{which we will generalize later to account for the SOP drift.} The innovation $\btin{k}$ is a random variable drawn independently at each time instance $k$ from a zero-mean Gaussian distribution 
\begin{equation}\label{eq:ph_distr}
  \btin{k} \sim \mathcal{N}(0,\sigma_\nu^2),
\end{equation}
where  $\sigma_\nu^2 = 2\pi { \df} T$ and  $\df$ is the  sum of the linewidths of the transmitter and local oscillator lasers. 

The accumulated phase noise at time $k$ is the summation of $k$ Gaussian random variables $\btin{1},\ldots,\btin{k}$ and the initial phase $\bt{0}$, which becomes a Gaussian-distributed random variable with mean $\bt{0}$ and variance $ k \sigma_\nu^2$. The function $\pn{k}$ is periodic with period $2\pi$, which means that the phase angle $\bt{k}$ can be limited to the interval $[0, 2\pi)$ by applying the modulo $2\pi$ operation. In this case, the  probability density function (pdf) of $\bt{k}$ becomes a wrapped (around the unit circle) Gaussian distribution. 
It can be straightforwardly verified that the phase drift model has the following properties:
\begin{enumerate}
\item The innovation  $\btin{k}$ is \emph{independent} of $\btin{i}$ for $i=0, \dots, k-1$.
\item The innovation is \emph{symmetric}, in the sense that the probabilities for phase changes $\bt{}$ and $-\bt{}$  are equally likely.
\item The most likely next phase state is the current state.
\item The outcome of two consecutive steps $\pnin{1}\pnin{2}$ can be emulated by a single step $\pnin{\text{t}}$ by doubling the variance of $\btin{k}$.
\item As $k$ increases, the distribution of  $\pn{k}$ will approach the \emph{uniform distribution} on the unit circle. The convergence rate towards this distribution increases with the $\df T$ product. 
\end{enumerate}
The time evolution of the pdf of a fixed point corrupted by phase noise is exemplified in Fig.~\ref{fig:unif_phase}.

\begin{figure*}[t] 
\centering
% \begin{subfigure}[b]{\wdth}
%         \centering
%         \includegraphics[width=\linewidth]{1a}
%         \caption{}
% \end{subfigure}
% \begin{subfigure}[b]{\wdth}
%         \centering
%         \includegraphics[width=\linewidth]{1b}
%         \caption{}
% \end{subfigure}
% \begin{subfigure}[b]{\wdth}
%         \centering
%         \includegraphics[width=\linewidth]{1c}
%         \caption{}
% \end{subfigure}
\centering
\includegraphics[width=0.98\linewidth]{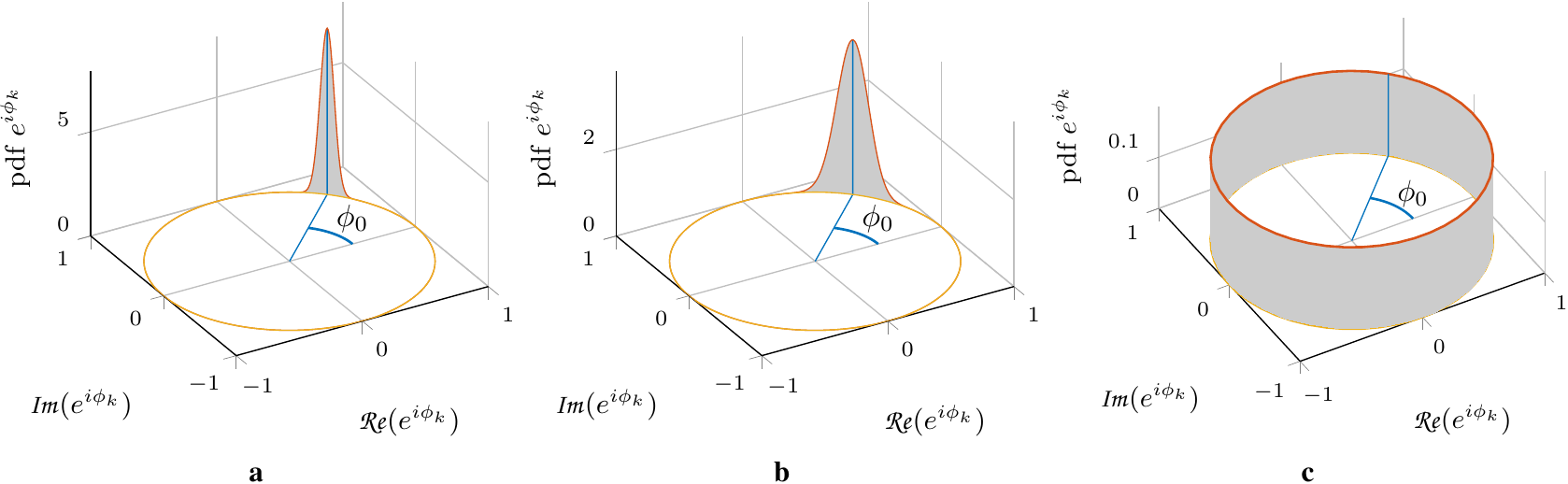}
\caption{Phase noise pdf evolution. The pdf of $\protect\pn{k}$ for $\protect\bt{0}=\pi/4$ and $\sigma_\nu^2=0.0025$ is shown. In ($\mathbf{a}$), $k=1$  corresponds to a single innovation and illustrates the second and third properties, i.e., the pdf is symmetric around the current state (the vertical line) and the peak of the pdf is at the current state.  In ($\mathbf{b}$),  $k=5$ and the pdf spreads over the circle. In ($\mathbf{c}$),  $k=8000$ and the pdf approaches the uniform pdf, which supports the last property. }
\label{fig:unif_phase}
\end{figure*}

The initial phase difference  between the two free running lasers has equal probability for every value, therefore it is common to model  $\bt{0}$  as a random variable uniformly distributed in the interval $[0, 2\pi)$, i.e., it has \emph{equal probability} for every possible state.

The autocorrelation function (ACF) quantifies the level of correlation between two samples of a random process taken at different time instances by taking the expected value $\mathbb{E}[\cdot]$ of the product of the samples. The ACF  of the phase noise with $lT$ time separation in response to a  constant input $\bx{}$ is \cite{Chorti2006}
\begin{equation}\label{eq:acf_pn}
  \begin{split}
  \autocorr{(l)}{\by{}} &= \mathbb{E}[\by{k}^\rmH\by{k+l}]\\
  &= \mathbb{E}[(\pn{k}\bx{})^\rmH\pn{k+l}\bx{}]\\
  &= \norm{\bx{}}^2\exp\big(-\frac{ \sigma_\nu^2\abs{l}}{2}\big)\\
  &= \norm{\bx{}}^2\exp\big(-{\pi \abs{l}T{ \df}}{}\big),
  \end{split}
\end{equation}
where the operations $\abs{\cdot}$ and  $\norm{\cdot}$ denote  absolute value and Euclidean norm, respectively. Here we neglected the SOP drift given by $\bR{k}$, in order to isolate the effects of the phase noise. We will compute the ACF of the SOP drift below.

In the literature, polarization rotations are generally modelled by the Jones matrix $\bR{k}$ in equation (\ref{eq:sys_mod}) or subsets of it, obtained by considering only one or two of the  three DOFs $\alp{1}{},\alp{2}{},\alp{3}{}$. Contrary to the phase noise, which is a random walk with respect to time,  the matrix $\bR{k}$ is in previous literature usually kept \emph{constant} \cite{Gong2011,Kikuchi2008,Liu2009,Roudas2010,Johannisson2011a}, or it follows a \emph{deterministic} cyclic/quasi-cyclic rotation pattern. The latter is obtained by modelling the parameters of $\bR{k}$ as frequency components $\omega k T$. For example,   $\alp{3}{}=\omega k T\,$ \cite{Johannisson2011,Savory2008,Muga2014} or $\alp{2}{},\alp{3}{}$ varied at different  frequencies \cite{Savory2010,Louchet2014,Heismann1995}.

\section*{Proposed Polarization  Drift Model}
\label{sec:proposed_model}
We propose to  model the polarization drift  as a sequence  of \emph{random} matrices, which exploit all \emph{three} DOFs of $\bR{k}$. The model simulates a random walk on the Poincar\'e sphere and we describe it in the Jones, Stokes and real 4D formalisms. 
In general, 4D unitary matrices  have \emph{six} DOFs, spanning a richer space than the  Jones (four DOFs) or Mueller  (three DOFs) matrices can. Out of the six DOFs, only four are physically realizable for propagating photons, and the remaining two are impossible to describe in the Jones or Stokes space \cite{Karlsson2014}. The Jones formalism can describe any physically possible phenomenon  in the optical fibre making it sufficient for wave propagation. The 4D representation is preferred in some digital communication scenarios since the performance of a constellation corrupted by additive noise can be directly quantified in this space in contrast to the Stokes formalism. Even though the extra two DOFs do not model lightwave propagation, they can be used to account for transmitter and/or receiver imperfections, which cannot be done using Jones or Mueller matrices. However, the extra two DOFs are out of the scope of this paper and we will focus on the remaining four. 
\subsection*{Jones Space Description}
\label{sec:jones-space-descr}

Similarly to the phase noise update equation (\ref{eq:ph_update}), we model the time evolution of $\bR{k}$ as
\begin{equation} \label{eq:time_dep}
  \bR{k} = \bRin{k}\bR{k-1},
\end{equation}
where $\bRin{k}$ is a random \emph{innovation} matrix defined as equation (\ref{eq_rot}). This mathematical formulation is not new to the polarization community, as it is commonly used to describe the polarization evolution in space (each Jones matrix represents a waveplate). However, here it models the temporal evolution of the SOP, where each innovation matrix corresponds to a time increment.
The parameters of the innovation $\bRin{k}$ are random and drawn independently from a zero-mean Gaussian distribution at each time instance $k$
\begin{equation}\label{eq:rand_alp}
  \alpi{}_k \sim \mathcal{N}(\mathbf{0},\sigma_p^2\, \mathbf{I}_3),
\end{equation}
where $\sigma_p^2= 2\pi { \Delta p}\, T$, and we refer to $\Delta p$ as the {\it polarization linewidth}, which quantifies the speed of the SOP drift, analogous to the linewidth describing the phase noise, cf. equation (\ref{eq:ph_distr}). 
Drawing $\alpv{}{}$ from a  zero-mean Gaussian distribution results in special cases of $\thet{}{} \text{ and }\alh{}$, where the former becomes a Maxwell--Boltzmann distributed random variable, and the vector $\alh{}$ is \emph{uniformly} distributed over the 3D unit sphere, implying that the marginal distribution of each $\alh{i}{}$ is uniform in $[-1,1]$ \cite{Muller1959}.

It is important to note that, contrary to phase noise, the equivalent vector $\alpv{k}{}$  parameterizing $\bR{k}$ in equation (\ref{eq:time_dep}) does not follow a Wiener process, i.e.,  $\alpv{k}{} \neq \alpi{}_k+\alpv{k-1}{} $, because in general $\bRf{}(\alpv{}_1)\bRf{}(\alpv{}_2)\neq \bRf{}(\alpv{}_1+\alpv{}_2)$. 
{Equality occurs for (anti-)parallel  $\alpv{}_1$  and  $\alpv{}_2$, and holds    approximately when  $\norm{\alpv{}_1}\ll 1$ and $\norm{\alpv{}_2}\ll 1$.} In a prestudy for this work \cite{Czegledi2015}, we incorrectly used a  Wiener process model for the polarization drift.  We will return to that model later and discuss its shortcomings.

\subsection*{Stokes Space Description}
\label{sec:stok-space-descr}

The evolution of the SOP is often analysed in the Stokes space, where the Jones vectors are expressed as real 3D Stokes vectors and can be easily visualized on the  Poincar\'e sphere. The transmitted Jones vector $\bx{k}$ can be expressed as a Stokes vector  \cite[eq.~(2.5.26)]{Damask2005} 
\begin{align} \label{eq:st_vec}
\bS{\bx{k}} &=\bx{k}^\rmH\pauliV{}\bx{k}\\
&=\begin{pmatrix}
\abs{E_{x}}^2-\abs{E_{y}}^2 \\
2\mathpzc{Re}(E_{x}E_{y}^\rmC) \\
-2\mathpzc{Im}(E_{x}E_{y}^\rmC) 
\end{pmatrix}, \nonumber
\end{align}
where the $i$th component of $\bS{\bx{k}}$ is given by $\bx{k}^\rmH\pauliV{i}\bx{k}$ and $(\cdot)^{\rmC}$ denotes conjugation. The equivalent Stokes propagation model of  equation (\ref{eq:sys_mod})  can be written  
\begin{equation}\label{eq:st_sys_mod}
  \bS{\by{k}} = \bM{k} \bS{\bx{k}} + \bS{\bn{k}}.
\end{equation}
It is important to note that only $\bS{\by{k}}$ and $\bS{\bx{k}}$ can be obtained by applying equation (\ref{eq:st_vec}) to $\by{k}$ and $\bx{k}$, respectively. The noise component $ \bS{\bn{k}}$  consists of three terms
\begin{equation}\label{eq:st_noise}
\bS{\bn{k}} = (\pn{k}\bR{k} \bx{k})^\rmH\pauliV{}\bn{k}+\bn{k}^\rmH\pauliV{}\pn{k}\bR{k} \bx{k} + \bn{k}^\rmH\pauliV{}\bn{k},
\end{equation}  
where the first two represent the signal--noise interaction and the last one the noise source.  It should be noted that there is no time averaging in equation (\ref{eq:st_vec}) and it represents an instantaneous  mapping to the Stokes space. Thus, the noise terms are polarized and rapidly varying.
 
The matrix  $\bM{k}$ is a $3\times3$ Mueller matrix, corresponding to the Jones  matrix $\bR{k}$, and the  polarization transformation introduced by it can be seen as a \emph{rotation} of the Poincar\'e sphere. It has three parameters, the same as $\bR{k}$, and can be written as 
$
  \bM{k} = \bMf(\alpv{k}), 
$
where the function $\bMf(\cdot)$ can be expressed using the matrix exponential {\cite{Gordon2000}}
\begin{equation} \label{eq:st_mat_exp}
\begin{split}
  \bMf{}(\alpv{}) &= \exp(2\cpo{\alpv{}}) \\
                  &= \exp(2\thet{}{}\cpo{\alh{}})\\
                  &= \mathbf{I}_3  + \sin(2\thet{}{}) \cpo{\alh{}} + (1- \cos(2\thet{}{})) \cpo{\alh{}}^2,
\end{split}
\end{equation}
where $\thet{}{} = \norm*{\alpv{}}$,  $\alh{}=\alpv{}/\thet{}{}$ and $\cpo{\alh{}}$ denotes 
\begin{equation}
  \cpo{\alh{}} =
\begin{pmatrix}
  0 &  -\alh{3}{} & \alh{2}{} \\
 \alh{3}{} &  0 & -\alh{1}{} \\
  -\alh{2}{} &  \alh{1}{} & 0 
\end{pmatrix}.
\end{equation}
The inverse can be obtained by 
$\bM{k}^{-1}=\bM{k}^{\rmT}=\bMf(-\alpv{k})$.
The transformation in equation (\ref{eq:st_mat_exp}) can be viewed in the axis-angle rotation description as a rotation around the unit vector $\alh{}$ by an angle $2\thet{}{}$.

Note that any operation that applies a phase rotation on both polarizations with the same amount, such as phase noise or frequency offsets, \emph{cannot} be modelled in the Stokes space. This can be seen in equation (\ref{eq:st_sys_mod}), where the phase noise does not alter the transmitted Stokes vector directly, but only contributes to the additive noise in equation (\ref{eq:st_noise}), which would not exist in the absence of $\bn{k}$.

Analogously to equation (\ref{eq:time_dep}), the time  evolution of $\bM{k}$ is
\begin{equation} \label{eq:st_time_dep}
  \bM{k} = \bMin{k}\bM{k-1},
\end{equation}
where $\bMin{k}$ is the innovation matrix parameterized by the random vector $\alpi{}_k$ defined in equation (\ref{eq:rand_alp}).

Fig.~\ref{fig:SOP_ex} shows an example of an SOP drift as it evolves with time. The line represents the evolution of the vector   $\bM{k}(0,0,1)^\rmT$ for $k=1,\ldots,3000$.

\begin{figure*}[t] 
\centering
% \begin{subfigure}[b]{\wdth}
%         \centering
%         \includegraphics[width=\linewidth]{2a}
%         \caption{}
% \end{subfigure}
% \begin{subfigure}[b]{\wdth}
%         \centering
%         \includegraphics[width=\linewidth]{2b}
%         \caption{}
%         \label{sub_fig:1}
% \end{subfigure}
% \begin{subfigure}[b]{\wdth}
%         \centering
%         \includegraphics[width=\linewidth]{2c}
%         \caption{}
% \end{subfigure}
\centering
\includegraphics[width=0.98\linewidth]{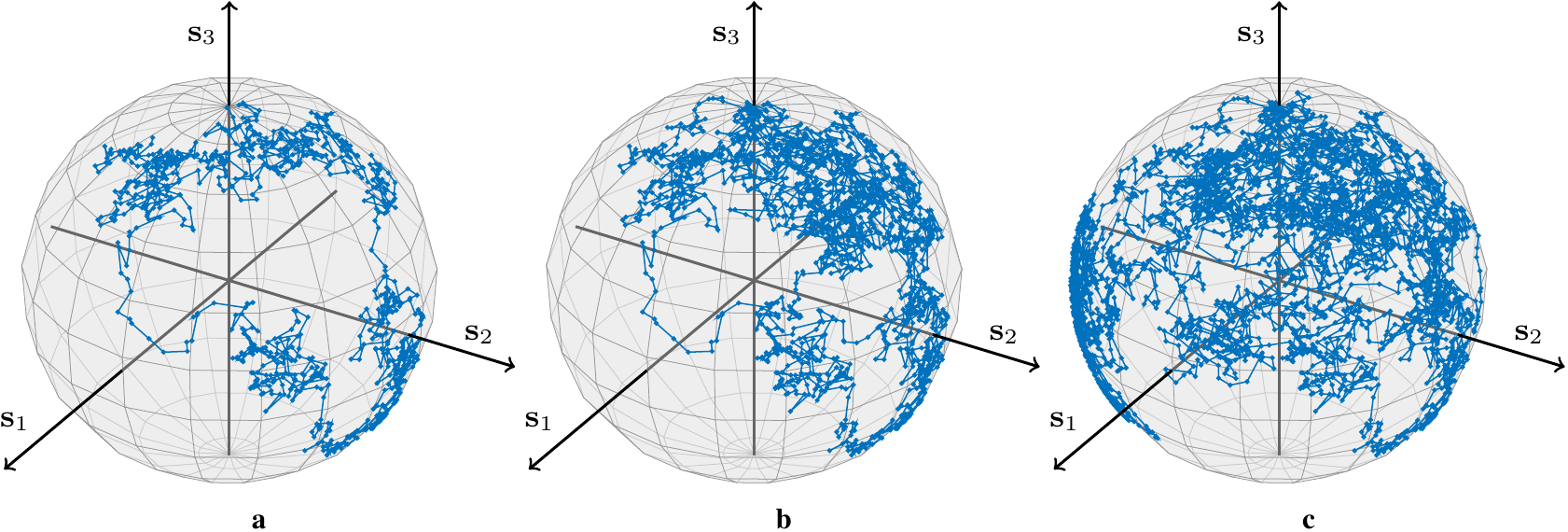}
\caption{Random walk. The evolution of a random SOP drift obtained by equation (\ref{eq:st_sys_mod}), without additive noise, for a fixed input $\bS{\bx{}}=(0,0,1)^\rmT$ and $\sigma_p^2=6\cdot 10^{-4}$ is shown.  The trajectories for ($\mathbf{a}$)  $k=1,\ldots, 300$, ($\mathbf{b}$) $k=1,\ldots, 1500$   and  ($\mathbf{c}$) $k=1,\ldots, 3000$ are plotted.}
\label{fig:SOP_ex}
\end{figure*}

\subsection*{4D Real Space Description}
\label{sec:4D-space-descr}
Another approach to express the time evolution of the phase and polarization drifts is to use the more uncommon  4D real formalism \cite{Karlsson2014,Betti1991,Agrell2009,Cusani1992,Karlsson2015}. {In this case, the transmitted/received sample and the noise term in equation (\ref{eq:sys_mod}) can all be represented as a real four-component vector   $( \mathpzc{Re}(E_{x}), \mathpzc{Im}(E_{x}), \mathpzc{Re}(E_{y}), \mathpzc{Im}(E_{y}))^\rmT$.}  The transformations induced by the channel are modelled by $4\times4$ real unitary matrices. 
The $\pn{k}\bR{k}$ transformation in equation (\ref{eq:sys_mod}) can be combined into 
\begin{equation} \label{eq:real_4D}
\bRR{k} = \bRRf(\bt{k},\alpv{k}),
\end{equation}
and the function $\bRRf(\cdot)$ can be described using the matrix exponential of a linear combination of four basis matrices \cite{Karlsson2014}
\begin{equation}\label{eq:4D_rot}
\begin{split}
\bRRf{}(\bt{},\alpv{}) &= \exp(-\bt{} \bbb{1}-\alpv{}\cdot\bb{}) \\
                      &= \exp(-\bt{} \bbb{1}) \exp(-\alpv{}\cdot\bb{})\\
                      & =(\mathbf{I}_4 \cos\bt{} - \bbb{1}\sin\bt{})(\mathbf{I}_4 \cos\thet{}{} - \alh{}\cdot{\bb{}}\sin\thet{}{}),
\end{split}
\end{equation}
where $\bb{}=(\bb{1},\bb{2},\bb{3})\text{ and }\bbb{1}$ are four constant basis matrices \cite[eqs. (20)--(23)]{Karlsson2014}{
\begin{equation}
\bb{1} = 
\begin{pmatrix}
  0 & -1 &  0 & 0\\
  1 &  0 &  0 & 0\\
  0 &  0 &  0 & 1\\
  0 &  0 & -1 & 0\\
\end{pmatrix};
\qquad
\bb{2} = 
\begin{pmatrix}
  0 &  0 &  0 & -1\\
  0 &  0 &  1 &  0\\
  0 & -1 &  0 &  0\\
  1 &  0 &  0 &  0\\
\end{pmatrix};
\qquad
\bb{3} = 
\begin{pmatrix}
  0 &  0 &  1 & 0\\
  0 &  0 &  0 & 1\\
 -1 &  0 &  0 & 0\\
  0 & -1 &  0 & 0\\
\end{pmatrix};
\qquad
\bbb{1}=
\begin{pmatrix}
  0 &  1 &  0 & 0\\
 -1 &  0 &  0 & 0\\
  0 &  0 &  0 & 1\\
  0 &  0 & -1 & 0\\
\end{pmatrix}.
\end{equation}
}
The inverse can be obtained as  
$\bRR{k}^{-1}=\bRR{k}^{\rmT}=\bRRf(-\bt{k},-\alpv{k})$.
The update of $\bRR{k}$ in equation (\ref{eq:real_4D}) can be expressed analogously to equations (\ref{eq:time_dep}) and  (\ref{eq:st_time_dep}) as
\begin{equation}\label{eq:4D_time_dep}
  \bRR{k} = \bRRin{k}\bRR{k-1},
\end{equation}
where the phase innovation $\btin{k}$ and the random vector $\alpi{}_k$ are defined by equation (\ref{eq:ph_distr}) and  (\ref{eq:rand_alp}), respectively. 

\begin{figure*}[t] 
\centering
% \begin{subfigure}[b]{\wdth}
%          \centering
%          \includegraphics[clip, width=\linewidth]{3a}
%          \caption{}
%  \end{subfigure}
%  \begin{subfigure}[b]{\wdth}
%          \centering
%          \includegraphics[clip, width=\linewidth]{3b}
%          \caption{}
%  \end{subfigure}
%  \begin{subfigure}[b]{\wdth}
%          \centering
%          \includegraphics[clip, width=\linewidth]{3c}
%          \caption{}
%  \end{subfigure}
%  \begin{subfigure}[b]{\wdth}
%          \centering
%          \includegraphics[clip,width=\linewidth]{3d}
%          \caption{}
%  \end{subfigure}
%  \begin{subfigure}[b]{\wdth}
%          \centering
%          \includegraphics[clip,width=\linewidth]{3e}
%          \caption{}
%  \end{subfigure}
%  \begin{subfigure}[b]{\wdth}
%          \centering
%          \includegraphics[clip,width=\linewidth]{3f}
%          \caption{}
%  \end{subfigure}
\centering
\includegraphics[width=0.98\linewidth]{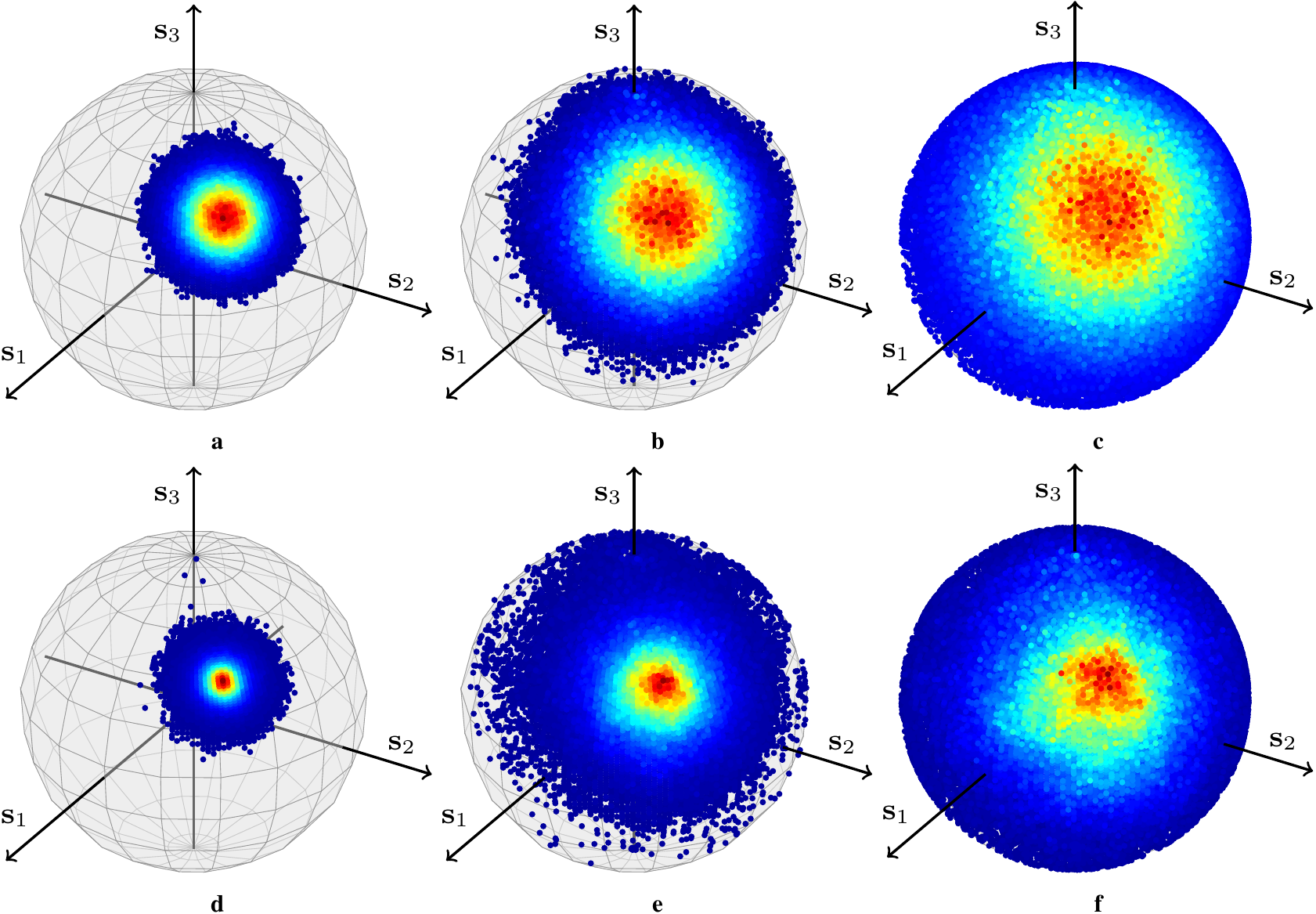}
\caption{The histograms of  $\protect \bM{k} \bS{\bx{}}$ for different steps $k$ and a fixed $\bS{\bx{}}=[1,1,1]^\rmT/\sqrt{3}$   obtained from the model (top row) and   from measurements (bottom row) are shown. The highest density is represented by dark red and the lowest by dark blue,  the outer part of the density. The parameters of the simulated drift in  equation (\ref{eq:rand_alp}) are $T=2.2$ h (set by the measurements) and $\Delta p=60$ nHz (obtained by fitting the  dash-dotted ACF line in  Fig. \ref{fig:autocorr}). In ($\mathbf{a}$) and ($\mathbf{d}$), $k=2$ innovation steps are plotted, whereas $k=8$ in  ($\mathbf{b}$), ($\mathbf{e}$) and $k=16$ in  ($\mathbf{c}$), ($\mathbf{f}$). Gaussian-like isotropic distributions can be noted in all cases, simulations and measurements, leading to a good (visual) agreement. The  spread over the sphere increases with $k$ and the pdf will become uniform if we let $k$ grow large enough. Unfortunately, our  measured data do not cover a long enough  time period  such that uniformity is achieved. }
\label{fig:SOP}
\end{figure*} 

\section*{Polarization Drift Model Properties}
\label{sec:model-validation}
Due to the similarities between the phase noise and the SOP drift, we will use the properties of the phase noise previously presented as guidelines to validate the proposed channel model. These properties can be reformulated as follows: 
\begin{enumerate}
\item The innovation $\alpi{}_k$ is \emph{independent} of $\alpi{}_i$ for $i=0, \dots, k-1$.
\item The innovation is \emph{isotropic}, in the sense that all possible orientations of the changes of the Stokes vector corresponding to a movement given by one innovation are equally likely.
\item The most likely next SOP is the current state.
\item The outcome of two consecutive steps $\bMin{1}\bMin{2}$ can be emulated by a single step $\bMin{\text{t}}$ by doubling the variance $\sigma_p^2$ of $\alpi{}_k$.
\item As $k$ increases, the pdf of the product $\bM{k} \bS{\bx{}}$, for a constant $\bS{\bx{}}$, will approach the \emph{uniform distribution} on the Poincar\'e sphere. The convergence rate towards this distribution increases with the ${ \Delta p} \, T$ product.
\end{enumerate}
 
%A detailed investigation of these properties, discussing their validity, is available in the Appendix.

In the following, we will investigate these properties  and discuss their validity.
\subsubsection*{Independent Innovations}
This can be easily concluded from the updating rule in equation (\ref{eq:time_dep}),  (\ref{eq:st_time_dep}) or  (\ref{eq:4D_time_dep}), where the parameters of the innovation do not depend on neither the previous innovations nor the state. 
\subsubsection*{Isotropic Innovation}
\label{sec:isotropic-innovation}
We will use the following theorem to show that the innovation $\bMin{}$ is isotropic.  
\newtheorem{theorem}{Theorem}
\begin{theorem}\label{th:1}
Let a random unit vector $\mathbf{a} \in \mathbb{R}^3$ be uniformly distributed over the 3D sphere, $\gamma$ be a random angle with an arbitrary pdf and  $\mathbf{x}\in \mathbb{R}^{3}$ an arbitrary unit vector.  The pdf of the vector $\mathbf{y}=\bMf{}(\gamma \mathbf{a})\mathbf{x}$, where $\bMf{}(\cdot)$ is defined in equation (\ref{eq:st_mat_exp}),  is invariant to rotations  around $\mathbf{x}$, i.e., $\bMf{}(\beta{}{} \mathbf{x})\mathbf{y}$ has the same pdf regardless of $\beta{}{}$. 
\end{theorem}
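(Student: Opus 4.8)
The plan is to reduce the statement to two ingredients: the isotropy of the distribution of the vector $\gamma\mathbf{a}$ in $\mathbb{R}^3$, and a conjugation identity for Poincar\'e-sphere rotations. First I would record the algebraic fact that, for every proper rotation $\mathbf{Q}$ of $\mathbb{R}^3$ and every $\mathbf{v}\in\mathbb{R}^3$,
\[
  \bMf{}(\mathbf{Q}\mathbf{v}) = \mathbf{Q}\,\bMf{}(\mathbf{v})\,\mathbf{Q}^{\rmT}.
\]
This follows from the equivariance of the cross product, $\cpo{\mathbf{Q}\mathbf{v}} = \mathbf{Q}\,\cpo{\mathbf{v}}\,\mathbf{Q}^{\rmT}$ (i.e.\ $\mathbf{Q}\mathbf{u}\times\mathbf{Q}\mathbf{w}=\mathbf{Q}(\mathbf{u}\times\mathbf{w})$), inserted term by term into the power series $\bMf{}(\mathbf{v})=\exp(2\cpo{\mathbf{v}})$ of equation~(\ref{eq:st_mat_exp}), using $\mathbf{Q}^{\rmT}\mathbf{Q}=\mathbf{I}_3$ to collapse $(\mathbf{Q}\,\cpo{\mathbf{v}}\,\mathbf{Q}^{\rmT})^n=\mathbf{Q}\,\cpo{\mathbf{v}}^{\,n}\,\mathbf{Q}^{\rmT}$. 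Geometrically it just says that rotating the axis of a rotation by $\mathbf{Q}$ conjugates the rotation matrix by $\mathbf{Q}$.

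Second, I would note that the law of $\gamma\mathbf{a}$ is rotationally invariant: since $\mathbf{a}$ is uniform on the unit sphere and independent of $\gamma$, for any fixed rotation $\mathbf{Q}$ we have $\mathbf{Q}(\gamma\mathbf{a})=\gamma(\mathbf{Q}\mathbf{a})\stackrel{d}{=}\gamma\mathbf{a}$, because $\mathbf{Q}\mathbf{a}$ is again uniform on the sphere and still independent of $\gamma$ (here $\stackrel{d}{=}$ denotes equality in distribution). Third, I would combine the two. Fix $\beta$ and put $\mathbf{Q}=\bMf{}(\beta\mathbf{x})$; since $\cpo{\mathbf{x}}\mathbf{x}=\mathbf{x}\times\mathbf{x}=\mathbf{0}$, this rotation fixes $\mathbf{x}$, so $\mathbf{Q}\mathbf{x}=\mathbf{x}$ and hence $\mathbf{x}=\mathbf{Q}^{\rmT}\mathbf{x}$. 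Using the conjugation identity and linearity,
\[
  \bMf{}(\beta\mathbf{x})\,\mathbf{y}
  = \mathbf{Q}\,\bMf{}(\gamma\mathbf{a})\,\mathbf{x}
  = \mathbf{Q}\,\bMf{}(\gamma\mathbf{a})\,\mathbf{Q}^{\rmT}\,\mathbf{x}
  = \bMf{}\!\big(\mathbf{Q}(\gamma\mathbf{a})\big)\,\mathbf{x}
  = \bMf{}\!\big(\gamma(\mathbf{Q}\mathbf{a})\big)\,\mathbf{x}.
\]
By the isotropy step, $\gamma(\mathbf{Q}\mathbf{a})\stackrel{d}{=}\gamma\mathbf{a}$, so the right-hand side has the same distribution as $\bMf{}(\gamma\mathbf{a})\mathbf{x}=\mathbf{y}$. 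As $\beta$ was arbitrary, the pdf of $\bMf{}(\beta\mathbf{x})\mathbf{y}$ does not depend on $\beta$, which is the claim.

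The main obstacle is not conceptual but organizational: stating and justifying the conjugation identity $\bMf{}(\mathbf{Q}\mathbf{v})=\mathbf{Q}\bMf{}(\mathbf{v})\mathbf{Q}^{\rmT}$ carefully, and making precise the step that a rotation-invariant random vector, pushed through this conjugation, keeps its law. A more computational alternative is to change variables directly on the joint density of $(\gamma,\mathbf{a})$ and check that the Jacobian factor is unchanged under the map $\mathbf{a}\mapsto\mathbf{Q}^{\rmT}\mathbf{a}$, but the distributional argument above is shorter and more transparent; one could equally well argue in the Jones picture with $\bRf{}$ and transport the result through the Jones--Stokes map, though working directly with $\bMf{}$ on $SO(3)$ is the most direct route.
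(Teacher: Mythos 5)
Your proposal is correct and follows essentially the same route as the paper: your conjugation identity $\bMf{}(\mathbf{Q}\mathbf{v})=\mathbf{Q}\,\bMf{}(\mathbf{v})\,\mathbf{Q}^{\rmT}$ is exactly the paper's Lemma~1 rearranged (with $\mathbf{Q}=\bMf{}(\gamma_1\mathbf{a}_1)$), both are derived from the rotation-equivariance of the cross product, and both proofs conclude by using $\bMf{}(\beta\mathbf{x})\mathbf{x}=\mathbf{x}$ together with the fact that a rotation of a uniformly distributed unit vector remains uniform. The only cosmetic difference is that you conjugate the power series of $\exp(2\cpo{\mathbf{v}})$ term by term, whereas the paper conjugates the closed three-term form of equation~(\ref{eq:st_mat_exp}) directly.
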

In other words, the theorem states that the pdf of the random vector $\mathbf{y}=\bMf{}(\gamma \mathbf{a})\mathbf{x}$ does not change if it is rotated by any angle around  $\mathbf{x}$. This can be true only if  $\mathbf{y}$ is isotropic (centred and equally likely in all directions) around  $\mathbf{x}$. The technical details of the proof are presented in Supplementary Section I. 

Note that Theorem \ref{th:1} holds for our proposed Stokes innovation matrix $\bMin{}$  since $\alpi{}=\thet{}{}\alh{}$ and  $\alh{}$ is uniformly distributed over  the 3D sphere, hence  the vector $\bMin{} \bS{\bx{}}$  is isotropic around $\bS{\bx{}}$. The evolution of $ \bM{k} \bS{\bx{}}$ can be seen as an \emph{isotropic random walk} on the Poincar\'e sphere starting at $ \bM{0} \bS{\bx{}}$ and taking random steps, of size proportional to $\sigma_p$, equally likely in all directions.  
Curiously, the isotropic property is given \emph{only} by the fact that the rotation axis $\alh{}$  is uniformly distributed over the sphere, while the angle $\thet{}{}$ may have \emph{any} pdf. In fact, the pdf  of the angle $\thet{}{}$ determines the shape of the pdf of $\bMin{} \bS{\bx{}}$, which we will discuss later. 
In contrast, our preliminary SOP drift model \cite{Czegledi2015} does not fulfil the isotropicity condition. The update method of the channel matrix was done by modelling  ${\alpv{k}{}}$ as Wiener processes, which does not fulfil Theorem \ref{th:1}.

\subsubsection*{Pdf of a Random Step}
Unfortunately, we were not able to find a closed form expression of the  pdf of the point $\bMin{} \bS{\bx{}}$ for a fixed $\bS{\bx{}}$ and a random $\bMin{}$. Using approximations, valid for  $\sigma_p^2\ll 1$, which is the case for most practical scenarios, the  pdf of  $\bMin{} \bS{\bx{}}$ can be approximated by a  bivariate Gaussian pdf centred at $\bS{\bx{}}$  of variance $\sigma_p^2\mathbf{I}_2$ on the plane normal to $\bS{\bx{}}$. The peak of the pdf is located at $\bS{\bx{}}$ and we can conclude that the next most likely SOP  is the current one. The details of the derivations are provided in Supplementary Section II.  In the same section, under the same assumption of small $\sigma_p^2$, we  show that the outcome of two consecutive random innovations can be achieved by a single random innovation if the variance of the  random variables $\alpi{}{}$ is doubled, which fulfils  property 4.

\subsubsection*{Uniformity}
The point  $ \bM{k} \bS{\bx{}}$  performs an isotropic random walk on the Poincar\'e sphere, therefore as the number of steps $k$ increases, the coverage of the  sphere will approach uniformity, meaning that all SOPs will be equally likely. This is  intuitive because taking random steps in all directions with no preferred orientation will lead to a uniform coverage. This property was observed by  Zhang et al.\cite{Zhang2007}, where  measurements of a  submarine cable resulted in uniform converge of the Poincar\'e sphere.

We compare the model with experimental results in Fig.~\ref{fig:SOP}, where  the evolution of the pdf of a Stokes vector affected by polarization drift after different numbers of innovations starting from a fixed point is shown. The experimental data was obtained by measuring the Jones matrices of a 127 km long buried fibre from 1505 nm to 1565 nm in steps of 0.1 nm for  36 days at every $\sim 2.2$ h. The technicalities of the measurement setup and postprocessing have been published elsewhere\cite{Karlsson2000}. The histograms corresponding to the measurements were captured from all the Stokes vectors obtained by applying equation (\ref{eq:st_vec}) to  the  product of the matrix $\bR{t}(\omega)  \bR{t-k}^{-1}(\omega)$, i.e.,  the measured  matrix corresponding to $k$ innovation steps, with a constant Jones vector, where $\bR{t}(\omega)$ denotes the measured Jones matrix at time  $t$ and wavelength $\omega$. 
%Comparing the SOP drift with the phase noise, for the same variance $\sigma_p^2=\sigma_\nu^2=0.0025$, the SOP drift spreads relatively faster on the Poincar\'e sphere than the phase noise on the circle. It also approaches the approximate uniform distribution after considerable fewer steps.

\subsection*{Initial State}
\label{sec:initial-state}
The initial channel matrix $\bM{0}$ should be chosen such that all the SOPs on the Poincar\'e sphere are \emph{uniformly} distributed, i.e., equally likely, after the initial step $\bM{0}\bS{\bx{}}$, for any  $\bS{\bx{}}$. Analogously, the initial phase $\bt{0}$ in equation (\ref{eq:ph_wiener}) should be chosen from a uniform distribution in the interval $[0, 2\pi)$. 
In order to generate such a matrix $\bM{0}$, the axis $\alh{}$ must be uniformly distributed over the unit sphere and the distribution of the angle $\thet{}{}\in[0, \pi/2)$ must be {$(1- \cos\thet{}{})/\pi$} \cite{Rummler2002}. The generation of the angle  $\thet{}{}$  is not straightforward, and therefore we present an alternative to generate the axis and the angle simultaneously \cite{Karlsson2015}.   
The vector $\alpv{0}= \thet{}{}\alh{}$ is formed from  the unit vector $(\cos \thet{}{}, \alh{1}{}\sin \thet{}{}, \alh{2}{} \sin \thet{}{},\alh{3}{}\sin \thet{}{})^\rmT = \mathbf{g}/\norm{\mathbf{g}} $, where $\mathbf{g} \sim \mathcal{N}(\mathbf{0}, \mathbf{I}_4)$, which will satisfy the conditions for both axis $\alh{}{}$ and angle $\thet{}{}$. This approach of generating $\alpv{0}{}$ of the initial channel matrix can be used regardless of the considered method to  characterize the polarization evolution, i.e., Jones , Stokes or real 4D formalism.

\subsection*{The SOP Autocorrelation Functions}
The ACF is commonly used to quantify how quickly, on average, the channel changes  in time, frequency, etc. \cite{Vannucci2002,Soliman2013,Karlsson1999}. The ACF of the SOP drift  separated by a time difference of  $lT$ {in response to a constant input $\bx{}$} can be expressed as
\begin{equation}\label{eq:pol_ACF}
  \begin{split}
  \autocorr{(l)}{\by{}}  &= \mathbb{E}[\by{k}^\rmH\by{k+l}]\\
  &= \mathbb{E}[(\bR{k}\bx{})^\rmH\bR{k+l}\bx{}]\\
  & = \norm{\bx{}}^2\bigg((1-\sigma_p^2)\exp\Big(-\frac{\sigma_p^2}{2}\Big)\bigg)^{\abs{l}},
\end{split}
\end{equation}
where $\by{k}=\bR{k}\bx{}$ is the received sample.
The details of this derivation can be found in Supplementary Section III. Using the approximation $(1- \abs{l}\sigma_p^2/ \abs{l})^{ \abs{l}} \approx\exp (- \abs{l}\sigma_p^2)$, which is valid for $\sigma_p^2\ll 1$, equation (\ref{eq:pol_ACF}) can be approximated as
\begin{equation}\label{eq:pol_ACF_app}
  \autocorr{(l)}{\by{}}   \approx \norm{\bx{}}^2 \exp(-3 \pi \abs{l} T \Delta p).
\end{equation}

The ACF expressions in equations (\ref{eq:pol_ACF}) and (\ref{eq:pol_ACF_app}) hold for the Jones and real 4D space descriptions. Using an analogous derivation  as for equation (\ref{eq:pol_ACF}),  it can be shown that the ACF of the SOP drift in the Stokes space is
\begin{equation}\label{eq:pol_ACF_stokes}
  \begin{split}
  \autocorr{(l)}{\bS{}}  &= \mathbb{E}[  \bS{\by{k}}^\rmT  \bS{\by{k+l}}]\\
  &= \mathbb{E}[(\bM{k} \bS{\bx{}})^\rmT\bM{k+l} \bS{\bx{}}]\\
  & = \norm{ \bS{\bx{}}}^2  \bigg(\frac{2(1-4\sigma_p^2)\exp(-2\sigma_p^2)+1}{3}\bigg)^{\abs{l}},
\end{split}
\end{equation}
where $\bS{\by{k}}=\bM{k} \bS{\bx{}}$ is the Stokes received sample of a constant  input $\bS{\bx{}}$ affected by polarization drift. Using the following approximations: $\exp(-2\sigma_p^2)\approx 1-2\sigma_p^2$, $\sigma_p^4\approx 0$ and $(1-{8\pi  \abs{l} T \Delta p}/{ \abs{l}})^{ \abs{l}}\approx \exp(-8\pi  \abs{l} T \Delta p)$ for $\sigma_p^2\ll 1$, it can be approximated as 
\begin{equation}\label{eq:pol_ACF_stokes_app}
  \autocorr{(l)}{\bS{}}   \approx \norm{ \bS{\bx{}}}^2  \exp(-8 \pi \abs{l} T \Delta p).
\end{equation}
Both  ACFs  of the polarization drift depend only on the time separation $l$  but not on the absolute time $k$.
Comparing equation (\ref{eq:pol_ACF_app}) with  (\ref{eq:pol_ACF_stokes_app}), it is interesting to note that even if they describe the same physical  phenomenon, on average, a small movement in the Jones/real 4D space will result in a movement that is $\sqrt{8/3}$ larger  in the Stokes space. This  result was also previously observed in similar setups analysing  polarization-mode dispersion, where the autocorrelation, with respect to frequency,  was derived \cite{Vannucci2002,Karlsson2000}. 
The ACF of the phase noise equation (\ref{eq:acf_pn}) has the slowest decay rate, being a  factor of three slower than  equation (\ref{eq:pol_ACF_app}), and a factor of eight slower than equation (\ref{eq:pol_ACF_stokes_app}). 

Fig.~\ref{fig:autocorr} shows the ACFs of the phase noise and the SOP drift, and the later is compared with experimental results. Here the analytic equations (\ref{eq:pol_ACF}) and (\ref{eq:pol_ACF_stokes}) match the experiment for  $\Delta p=80$ nHz and $\Delta p=60$ nHz, respectively, and $T=2.2$ h. We attribute the discrepancy between the two values  to the nonunitary of the measured Jones matrices, which, through the nonlinear operation in equation (\ref{eq:st_vec}), cause the inconsistency. 

\begin{figure}[t]
\centering
\includegraphics[width=0.6\linewidth]{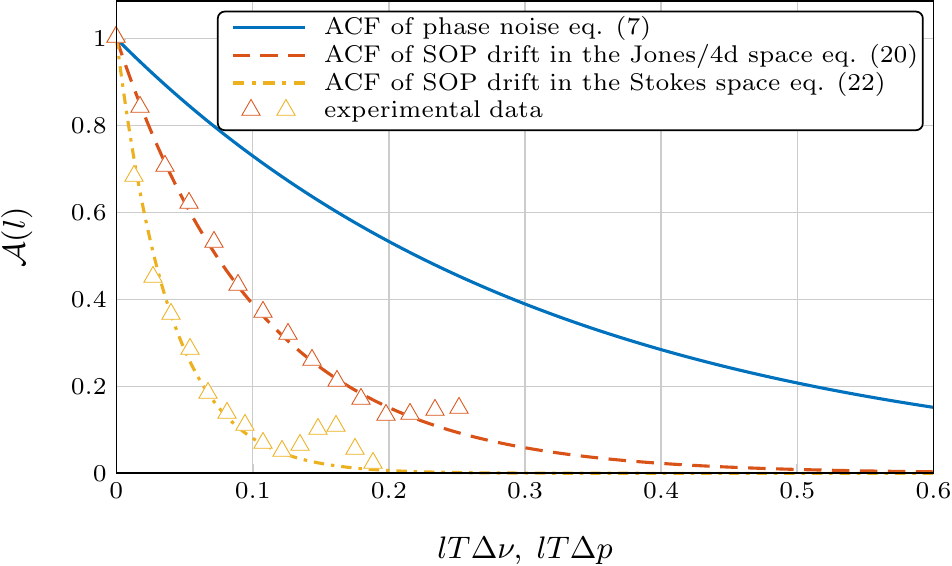}
\caption{ACF comparison. The normalized ACF of the phase noise and SOP drift is plotted versus normalized time. Solid/dashed lines refer to the analytic expressions, whereas the triangles are extracted from measurements. We  observe excellent  agreement between the experiment and theory, except in the tails of the experimental ACF. This inconsistency can be caused by the lack of accuracy in that region.  }
\label{fig:autocorr}
\end{figure}

\subsection*{Linewidth Parameters}
The choice of  $\df$ and  $\Delta p$ is important when a system is simulated considering phase noise and/or SOP drifts. Both parameters reflect physical properties of the system.  The quality of the (transmitter and receiver) lasers can be quantified by the $\df$ parameter, which represents the  sum of the linewidths of the deployed lasers. Distributed feedback lasers are commonly employed in transmission systems due to their cost efficiency. The linewidth of such lasers varies from 100 kHz to 10 MHz \cite{Pfau2009}.
The polarization linewidth parameter depends on the installation details. 
Measurements of polarization fluctuations  have been reported varying from   weeks (this work) to  seconds  \cite{Ogaki2003}, milliseconds \cite{Bulow1999,Krummrich2005} or microseconds under mechanical perturbations \cite{Krummirich2004}. 
The polarization linewidth  could be quantified through measurements of the ACF, either in the Jones or Stokes space, as in Fig.~\ref{fig:autocorr}. 

\section*{Summary}
%\label{sec:summary}
This section is intended to summarize the previous sections and provide an easy implementable form of the proposed model without requiring knowledge about the details of the derivations.

First, the parameters of the channel must be selected:
\begin{itemize}
\item The sampling time $T$, often equal to the symbol interval.
\item The laser linewidth $\df$, where the contributions of the transmitter and receiver lasers are taken into account.
\item The polarization linewidth $\Delta p$.
\end{itemize}
The model is then implemented as follows:
\begin{enumerate}
\setcounter{enumi}{-1}
\item Set the initial state of the channel:
\begin{itemize}
  \item $\bt{0} \sim \mathcal{U}\{0, 2\pi\}$.
  \item $\bR{0} = \bR{}(\alpv{}_0) $ using equation (\ref{eq_rot}), where $\alpv{}_0 = \thet{}{}\alh{}$;  $ \thet{}{}\text{ and }\alh{}=(\alh{1},\alh{2},\alh{3})$ are identified from the unit vector \\ $(\cos \thet{}{}, \alh{1}{}\sin \thet{}{}, \alh{2}{} \sin \thet{}{},\alh{3}{}\sin \thet{}{})^\rmT = \mathbf{g}/\norm{\mathbf{g}} $, where $\mathbf{g} \sim \mathcal{N}(\mathbf{0}, \mathbf{I}_4)$. 
\end{itemize}
\item For $k\ge 1$, update the channel state:
  \begin{itemize}
    \item $\bt{k}=\btin{k}+\bt{k-1}$, where $\btin{k} \sim \mathcal{N}(0,2\pi\df T)$. 
    \item $ \bR{k} = \bRin{k}\bR{k-1}$, where $\bRin{k}$ is given by equation (\ref{eq_rot}) and $\alpi{}_k \sim \mathcal{N}(\mathbf{0},2\pi { \Delta p}\, T\, \mathbf{I}_3)$.
\end{itemize}
\item Apply the model for every transmitted sample $\bx{k}$, which results in the received sample $\by{k}$:
  \begin{itemize}
    \item $\by{k}= \pn{k}\bR{k} \bx{k}+\bn{k}$, where $\bn{k}$  denotes the additive noise represented by two independent complex circular zero-mean Gaussian random variables.
\end{itemize}
\end{enumerate}
Repeat the last two steps for every sample $\bx{k}$.

Alternatively, the Jones formalism used above can be replaced by the Stokes formalism using equation (\ref{eq:st_mat_exp}) instead of   equation (\ref{eq_rot}) and neglecting the phase noise $\pn{k}$; or  4D formalism by  combining  $\pn{k}$ and $\bR{k}$ into $\bRR{k}$ given by equation (\ref{eq:4D_rot}). In the former case, $\bx{k}$ and $\by{k}$ are 3D Stokes vectors defined as equation (\ref{eq:st_vec}), whereas in the latter case, $\bx{k}$, $\by{k}$ and $\bn{k}$ are 4D real vectors.

In the summary above, we have included phase  and additive noise for completeness. Without loss of generality, the model can be applied with/without phase and additive noise; also other impairments can be considered.

\section*{Conclusions}
\label{sec:concl}
We have proposed a channel model to emulate random polarization fluctuations based on a sequence of random matrices. The model is presented in the three formalisms, Jones, Stokes and real  4D, and it is   parameterized by a single  variable, called the polarization linewidth. 

The model has an isotropic behaviour, which has been successfully verified using experimental data, where every step on the Poincar\'e sphere is equally likely in all directions emulating an isotropic random walk and can be easily coupled with any other impairments to form a complete channel model.

Compared to the existing literature, the fundamental advantages of the proposed model are randomness and statistical uniformity.  Such a model is relevant for a wide range of fibre-optical applications where stochastic polarization fluctuations are an issue.  It can potentially lead to improved signal processing that accounts optimally for this impairment and more realistic simulations can be carried out in order to accurately quantify system performance.

% The  proposed model is somewhat idealized in that it assumes the polarization fluctuations to be   perfectly isotropic and uncorrelated. The  behaviour of an installed system depends very much on the installation specifics of the system, which differ in function of the technical details as well as environmental conditions such as temperature, vibrations, etc. This requires the development of a model for each  system individually at different times of the day/different seasons, which is nearly impossible. Furthermore, the design of algorithms to track the impairment becomes troublesome since they must be tailored for each fibre link individually, making the existence of an universal algorithm impossible. However, the proposed model accounts for the worst-case scenario, where no information about memory and correlation is taken into account, and generalizes all the possible installation types. This favours the design of universal tracking algorithms, which can be used in any transceiver.  

%\section*{Appendix: Theoretical Model Validation}

\bibliographystyle{naturemag}
%\bibliographystyle{naturemag_noURL}
%\bibliographystyle{IEEEtran}
%\bibliography{./../../bib/MyDatabase_CBC}

\section*{Acknowledgements}
C. B. Czegledi would like to thank R. Devassy for suggesting the derivation in Supplementary Section I and inspiring discussions. This research was supported by the Swedish Research Council (VR) under grants no. 2010-4236 and 2012-5280, and performed within the Fiber Optic Communications Research Center (FORCE) at Chalmers.

\newpage
\vspace{0.4cm}
{\centering
{ \LARGE Supplementary Information}
\vspace{0.0cm}

\begin{center}{
\large Cristian~B.~Czegledi, Magnus~Karlsson, Erik~Agrell \\  and~Pontus~Johannisson
}
\end{center}
}

\section{Isotropicity} \label{app:isotropic}
We will first prove a lemma, which will be used to prove Theorem 1 in the main paper.
\newtheorem{lem}{Lemma}
\begin{lem}\label{lem:1}
For all unit vectors  $\mathbf{a}_1, \mathbf{a}_2 \in \mathbb{R}^{3}$ and angles $\gamma_1, \gamma_2\in \mathbb{R}$ we have
\begin{myequation} \label{eq:lemm1}
\bMf{}(\gamma_1 \mathbf{a}_1)  \bMf{}(\gamma_2 \mathbf{a}_2)  = \bMf{}(\gamma_2 \bMf{}(\gamma_1 \mathbf{a}_1)  \mathbf{a}_2) \bMf{}(\gamma_1 \mathbf{a}_1),  \tag{S1} 
\end{myequation}
where $\bMf{}(\cdot)$ is defined as equation (13).
\end{lem}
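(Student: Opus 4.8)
The plan is to recognise equation~(S1) as the classical identity expressing that conjugating a rotation by another rotation yields a rotation by the same angle about the rotated axis. Since $\bMf{}(\gamma_1\mathbf{a}_1)$ is invertible with $\bMf{}(\gamma_1\mathbf{a}_1)^{-1}=\bMf{}(\gamma_1\mathbf{a}_1)^{\rmT}$, equation~(S1) is equivalent, after right‑multiplication by this inverse, to the single statement
\[
\bMf{}(\gamma_1\mathbf{a}_1)\,\bMf{}(\gamma_2\mathbf{a}_2)\,\bMf{}(\gamma_1\mathbf{a}_1)^{\rmT}=\bMf{}\!\big(\gamma_2\,\bMf{}(\gamma_1\mathbf{a}_1)\mathbf{a}_2\big),
\]
so I would prove this form and then undo the multiplication.

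First I would record two elementary facts. (i)~For every proper rotation $R$ (orthogonal with $\det R=1$) and every vector $\mathbf{b}\in\mathbb{R}^3$ one has $R\,\cpo{\mathbf{b}}\,R^{\rmT}=\cpo{R\mathbf{b}}$; this is checked by applying both sides to an arbitrary $\mathbf{w}$, writing $\cpo{\mathbf{b}}\mathbf{v}=\mathbf{b}\times\mathbf{v}$ and using $R(\mathbf{u}\times\mathbf{v})=(R\mathbf{u})\times(R\mathbf{v})$, which holds precisely because $\det R=1$. (ii)~For any square matrix $N$ and any invertible $R$, $R\exp(N)R^{-1}=\exp(RNR^{-1})$, immediate from the power series since $RN^{n}R^{-1}=(RNR^{-1})^{n}$. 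I would also note that $\bMf{}(\gamma_1\mathbf{a}_1)=\exp(2\gamma_1\cpo{\mathbf{a}_1})$ is the exponential of a real skew‑symmetric matrix, hence a proper rotation, and that $\bMf{}(\gamma_1\mathbf{a}_1)\mathbf{a}_2$ is again a unit vector, so the right‑hand side above is a legitimate instance of equation~(13).

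Combining the two facts with $\bMf{}(\gamma_2\mathbf{a}_2)=\exp(2\gamma_2\cpo{\mathbf{a}_2})$ gives
\[
\bMf{}(\gamma_1\mathbf{a}_1)\,\exp\!\big(2\gamma_2\cpo{\mathbf{a}_2}\big)\,\bMf{}(\gamma_1\mathbf{a}_1)^{\rmT}
=\exp\!\big(2\gamma_2\,\bMf{}(\gamma_1\mathbf{a}_1)\,\cpo{\mathbf{a}_2}\,\bMf{}(\gamma_1\mathbf{a}_1)^{\rmT}\big)
=\exp\!\big(2\gamma_2\,\cpo{\bMf{}(\gamma_1\mathbf{a}_1)\mathbf{a}_2}\big),
\]
and the last expression is exactly $\bMf{}\big(\gamma_2\,\bMf{}(\gamma_1\mathbf{a}_1)\mathbf{a}_2\big)$. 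Right‑multiplying by $\bMf{}(\gamma_1\mathbf{a}_1)$ and using $\bMf{}(\gamma_1\mathbf{a}_1)^{\rmT}\bMf{}(\gamma_1\mathbf{a}_1)=\mathbf{I}_3$ then recovers equation~(S1).

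The only delicate point — the main obstacle, though a mild one — is fact~(i): the sign is correct only because $\bMf{}(\gamma_1\mathbf{a}_1)$ preserves orientation; an orientation‑reversing orthogonal matrix would send $\cpo{\mathbf{b}}$ to $-\cpo{R\mathbf{b}}$ and break the identity, so the remark that $\bMf{}$ always lands in the rotation group is essential. As a fully self‑contained fallback I would instead substitute the closed form $\bMf{}(\gamma\mathbf{a})=\mathbf{I}_3+\sin(2\gamma)\cpo{\mathbf{a}}+(1-\cos 2\gamma)\cpo{\mathbf{a}}^{2}$ from equation~(13) into both sides of~(S1) and match the coefficients of $\mathbf{I}_3$, $\cpo{\bMf{}(\gamma_1\mathbf{a}_1)\mathbf{a}_2}$ and its square, again invoking $\cpo{R\mathbf{b}}=R\cpo{\mathbf{b}}R^{\rmT}$; this works but is more calculation‑heavy, so the exponential argument above is preferable.
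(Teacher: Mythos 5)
Your proof is correct, and your primary route differs from the paper's in a way worth noting. The paper proves the intertwining relation $\cpo{\mathbf{v}}\bMf{}(\gamma_1\mathbf{a}_1)=\bMf{}(\gamma_1\mathbf{a}_1)\cpo{\mathbf{a}_2}$ with $\mathbf{v}=\bMf{}(\gamma_1\mathbf{a}_1)\mathbf{a}_2$ (from the rotation-invariance of the cross product), squares it, and then expands $\bMf{}(\gamma_2\mathbf{v})$ via the Rodrigues form $\mathbf{I}_3+\sin(2\gamma_2)\cpo{\mathbf{v}}+(1-\cos 2\gamma_2)\cpo{\mathbf{v}}^2$, commuting each term past $\bMf{}(\gamma_1\mathbf{a}_1)$ --- i.e.\ exactly your ``fallback.'' Your main argument instead conjugates the matrix exponential directly, using $R\exp(N)R^{-1}=\exp(RNR^{-1})$ together with $R\,\cpo{\mathbf{b}}\,R^{\rmT}=\cpo{R\mathbf{b}}$, which collapses the term-by-term matching into one line; both proofs rest on the same key fact about the cross product, so the difference is one of packaging rather than substance. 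Your explicit remark that the sign in $R\,\cpo{\mathbf{b}}\,R^{\rmT}=\cpo{R\mathbf{b}}$ depends on $\det R=1$ (guaranteed here since $\bMf{}$ is the exponential of a skew-symmetric matrix) is a genuine point of care that the paper's proof passes over silently when it asserts that the cross product is ``invariant under rotations.''
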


\begin{proof}
To prove it, we will use the fact that the cross product $\cpo{\mathbf{a}} \mathbf{b}$, where $\cpo{\mathbf{\cdot}}$ is defined in equation~(14),  is invariant under rotations $ \bM{}$
\begin{myequation}
  \bM{} \cpo{\mathbf{a}} \mathbf{b} =      \cpo{\bM{}\mathbf{a}} \bM{}\mathbf{b},\tag{S2}
\end{myequation}for all  $\mathbf{a},\mathbf{b} \in \mathbb{R}^{3}$ and any unitary matrix $\bM{}$ defined as equation (13). This is true since the vector $\cpo{\mathbf{a}} \mathbf{b}$ is orthogonal to $ \mathbf{a}$ and $ \mathbf{b}$ and its length depends on the area given by the parallelogram formed by  $ \mathbf{a}$ and $ \mathbf{b}$. These properties are invariant under rotations, hence so is the cross product. Consequently,
\begin{myequation}\label{eq:he}
  \bM{} \cpo{\mathbf{a}} =      \cpo{\bM{}\mathbf{a}} \bM{}. \tag{S3}
\end{myequation}

Now let us denote $\mathbf{v} = \bMf{}(\gamma_1 \mathbf{a}_1)\mathbf{a}_2$ and, based on equation (\ref{eq:he}), we can write
\begin{align*}
      \cpo{\mathbf{v}} \bMf{}(\gamma_1 \mathbf{a}_1) & =  \cpo{\bMf{}(\gamma_1 \mathbf{a}_1)\mathbf{a}_2} \bMf{}(\gamma_1 \mathbf{a}_1) \\ 
                                  & = \bMf{}(\gamma_1 \mathbf{a}_1) \cpo{\mathbf{a}_2}.  \tag{S4} \label{eq:S8}
\end{align*}

By applying equation (\ref{eq:S8}) twice, we obtain
\begin{align*}
      \cpo{\mathbf{v}}^2 \bMf{}(\gamma_1 \mathbf{a}_1)  & =  \cpo{\mathbf{v}}\bMf{}(\gamma_1 \mathbf{a}_1) \cpo{\mathbf{a}_2} \\
               &= \bMf{}(\gamma_1 \mathbf{a}_1) \cpo{\mathbf{a}_2}^2.  \tag{S5} \label{eq:s9}
\end{align*}

The right-hand side of equation (\ref{eq:lemm1}) can be simplified as 
\begin{align*}
    \bMf{}(\gamma_2  \bMf{}(\gamma_1 \mathbf{a}_1)  \mathbf{a}_2) \bMf{}(\gamma_1 \mathbf{a}_1)  
    & = \bMf{}(\gamma_2 \mathbf{v}) \bMf{}(\gamma_1 \mathbf{a}_1)   \\
    & = ( \mathbf{I}_3  + \sin(2\gamma_2) \cpo{\mathbf{v}} +  (1- \cos(2\gamma_2)) \cpo{\mathbf{v}}^2) \bMf{}(\gamma_1 \mathbf{a}_1) \tag{S6a} \label{eq:S9.3} \\    
    & =\bMf{}(\gamma_1 \mathbf{a}_1)  + \sin(2\gamma_2) \cpo{\mathbf{v}} \bMf{}(\gamma_1 \mathbf{a}_1)+  (1- \cos(2\gamma_2)) \cpo{\mathbf{v}}^2 \bMf{}(\gamma_1 \mathbf{a}_1)\\    
    & =\bMf{}(\gamma_1 \mathbf{a}_1)  + \sin(2\gamma_2) \bMf{}(\gamma_1 \mathbf{a}_1) \cpo{\mathbf{a}_2} +  (1- \cos(2\gamma_2)) \bMf{}(\gamma_1 \mathbf{a}_1)\cpo{\mathbf{a}_2}^2 \tag{S6b} \label{eq:s9.5} \\    
    & =\bMf{}(\gamma_1 \mathbf{a}_1) (\mathbf{I}_3   + \sin(2\gamma_2)\cpo{\mathbf{a}_2} +  (1- \cos(2\gamma_2)) \cpo{\mathbf{a}_2}^2 )  \\ 
    & =\bMf{}(\gamma_1 \mathbf{a}_1) \bMf{}(\gamma_2 \mathbf{a}_2), \tag{S6c} \label{eq:s10}
\end{align*}
where  equations (\ref{eq:S9.3}) and  (\ref{eq:s10}) follow from equation (13) and in equation (\ref{eq:s9.5}) we used equations (\ref{eq:S8}) and  (\ref{eq:s9}).
\end{proof}

Now we have the necessary tools to prove  Theorem 1 in the main paper.
\begin{proof}
The theorem can be proved by showing that $\bMf{}(\beta{}{} \mathbf{x})\mathbf{y}\sim \mathbf{y}$ for any real angle $\beta{}{}$. 
Let $\mathbf{z}=\bMf{}(\beta{}{} \mathbf{x})\mathbf{y}$, which can be expressed as
\begin{align*}
    \mathbf{z} &=\bMf{}(\beta{}{} \mathbf{x})\bMf{}(\gamma \mathbf{a})\mathbf{x} \\
               &=\bMf{}(\gamma \bMf{}(\beta{}{} \mathbf{x})\mathbf{a})\bMf{}(\beta{}{} \mathbf{x})\mathbf{x} \tag{S7a} \label{eq:s10.5}\\
               &=\bMf{}(\gamma \bMf{}(\beta{}{} \mathbf{x})\mathbf{a})\mathbf{x},  \tag{S7b}
\end{align*}
where in equation (\ref{eq:s10.5}) we used Lemma \ref{lem:1}. Since the vector $\mathbf{a}$ is uniformly distributed over the 3D sphere, the vector $\bMf{}(\beta{}{} \mathbf{x})\mathbf{a}$ is also uniformly distributed over the 3D sphere \cite[Def. 6.18]{Lapidoth2003}, which makes  $\mathbf{z}\sim \mathbf{y}$.
\end{proof}

% \section{Isotropicity} \label{app:isotropic}
% \begin{proof}
% The theorem can be proved by showing that $\bMf{}(\beta{}{} \mathbf{x})\mathbf{y}\sim \mathbf{y}$ for any real angle $\beta{}{}$. 
% Let $\mathbf{z}=\bMf{}(\beta{}{} \mathbf{x})\mathbf{y}$, which can express as
% \begin{myequation}
%   \begin{split}
%     \mathbf{z} &=\exp( \cpo{\beta{}{}\mathbf{x}})\exp( \cpo{\gamma\mathbf{a}})\mathbf{x} \\
%                &=\exp( \cpo{\beta{}{}\mathbf{x}})\exp( \cpo{\gamma\mathbf{a})}\exp( \cpo{-\beta{}{}\mathbf{x}})\exp( \cpo{\beta{}{}\mathbf{x}})\mathbf{x} \\
%                &=\exp( \cpo{\exp (\cpo{\beta{}{}\mathbf{x}}}\gamma\mathbf{a}))\mathbf{x},  \\
%   \end{split}
% \end{myequation}
% where in the third equality we used the  equalities $\exp( \cpo{\beta{}{}\mathbf{x}})\exp( \cpo{\gamma\mathbf{a})}\exp( \cpo{-\beta{}{}\mathbf{x}})  = \exp( \cpo{\exp( \cpo{\beta{}{}\mathbf{x}})\gamma\mathbf{a})}$ and $\exp( \cpo{\beta{}{}\mathbf{x}})\mathbf{x} = \mathbf{x}$ {\color{red}if you think is needed, I can provide a proof for the first equality}. Since the vector $\mathbf{a}$ is uniformly distributed over the 3D sphere, the vector $\exp (\cpo{\beta{}{}\mathbf{x}})\mathbf{a}$ is also uniformly distributed over the 3D sphere \cite[Def. 6.18]{Lapidoth2003}, which makes  $z\sim y$.
% \end{proof}

\section{3D Distribution Analysis} \label{app:small_sig}
In this section, we derive the approximate pdf of the point $\bS{\by{}}=\bMf{}(\alpi{}) \bS{\bx{}}$ for a fixed $\bS{\bx{}}$ and random $\bMf{}(\alpi{})$. Exact expressions are very difficult to obtain, therefore we make use of the  approximations $\sin 2\thet{}{}\approx 2\thet{}{} $ and $\cos 2\thet{}{} \approx 1$, valid for $\sigma_p^2\ll 1$, i.e., $\norm{\alpi{}{}}\ll 1$. Thus $\bMf{}(\alpv{})$ in equation (13) can be approximated as 
\begin{align*}
   \bMf{}(\alpv{})  & \approx \mathbf{I}_3  + 2\thet{}{} \cpo{\alh{}} \\
& \approx
\begin{pmatrix}
 1 &  -2\thet{}{} \alh{3}{} & 2\thet{}{} \alh{2}{} \\
 2\thet{}{} \alh{3}{} &  1  & -2\thet{}{} \alh{1}{} \\
 -2\thet{}{} \alh{2}{} &   2\thet{}{} \alh{1}{} & 1
\end{pmatrix} \\
& \approx
\begin{pmatrix}
 1 &  -2\alp{3}{} & 2\alp{2}{} \\
 2\alp{3}{} &  1  & -2\alp{1}{} \\
 -2\alp{2}{} &   2\alp{1}{} & 1
\end{pmatrix}. \tag{S8} \label{eq:s12}
\end{align*}

% {\color{red} This is another approximation \begin{myequation} 
%    \bM{}(\alpv{}) \approx
% \begin{pmatrix}
%  \sqrt{1-\alp{2}{}^2 - \alp{3}{}^2} &  -\alp{3}{} & \alp{2}{} \\
%  \alp{3}{} &   \sqrt{1-\alp{1}{}^2 - \alp{3}{}^2}  & -\alp{1}{} \\
%  -\alp{2}{} &   \alp{1}{} &  \sqrt{1-\alp{1}{}^2 - \alp{2}{}^2}
% \end{pmatrix}
% \end{myequation} which agrees very well in simulations, but  (\ref{eq:M_expn_app_cat}) becomes very difficult to prove using this approximation of $\bM{}$.}

Without loss of generality,   we simplify the analysis by setting $\bS{\bx{}}=(1,0,0)^{\mathrm{T}}$. In this case, based on equation (\ref{eq:s12}),   $\bS{\by{}}=\bMf{}(\alpi{}) \bS{\bx{}}=(1, 2\alpi{3}{} , -2\alpi{2}{} )^{\mathrm{T}}$ and it can be noted that $\bS{\by{}}$ then has a bivariate Gaussian distribution on the plane normal to $\bS{\bx{}}$ and the peak of the distribution centred at $\bS{\bx{}}$. 

Using equation (\ref{eq:s12}) and by removing high order terms, such as $\alp{i}{}\alp{j}{}$ for any $i,j$, the multiplication of two matrices $\bMf{}(\alpv{})$  can be approximated as
\begin{align*}
   \bMf{}(\alpv{}) \bMf{}(\pmb{\beta})&\approx
\begin{pmatrix}
 1 &  -2\alp{3}{}-2\beta_3 & 2\alp{2}{} +2\beta_2\\
 2\alp{3}{}+2\beta_3 &  1  & -2\alp{1}{}-2\beta_1 \\
 -2\alp{2}{} -2\beta_2&   2\alp{1}{} +2\beta_1& 1
\end{pmatrix}\\
&\approx \bMf{}(\alpv{}+\pmb{\beta}). \tag{S9} \label{eq:s13}
\end{align*} 
From equation (\ref{eq:s13}) we can conclude that two consecutive small innovations  can be replaced by a single innovation, i.e., $\bMin{1}\bMin{2}\bS{\bx{}}\sim\bMin{\text{t}}\bS{\bx{}}$, by doubling the variance  $\sigma_p^2$.

\section{Autocorrelation} \label{app:autocorr}
In this section, we derive the ACF of the SOP drift in equation (20). The derivation uses the Jones description of the model but the result  is valid for the 4D description as well.

At first we will calculate the expectation of the innovation matrix from equation (2)
\begin{align*}
     \mathbb{E}[\bRin{}] &= \mathbb{E}[ \mathbf{I}_2 \cos(\thet{}{}) -i(\alh{1}{}\pauliV{1}+\alh{2}{}\pauliV{2}+\alh{3}{}\pauliV{3})\sin(\thet{}{})]\\
                        &= \mathbb{E}[\cos(\thet{}{})] \mathbf{I}_2   \tag{S10a} \label{eq:1}\\
                        &= \int^\infty_0 \cos(\thet{}{}) f_{\thet{}{}}(\thet{}{}) \mathrm{d}\thet{}{} \mathbf{I}_2 \\
                        & = \bigg((1-\sigma_p^2)\exp\Big(-\frac{\sigma_p^2}{2}\Big)\bigg) \mathbf{I}_2, \tag{S10b} \label{eq:2}
\end{align*}
where $\alpi{} = (\alpi{1}, \alpi{2}, \alpi{3}) \sim \mathcal{N}(\mathbf{0},\sigma_p^2\, \mathbf{I}_3)$, $\thet{}{} = \norm*{\alpi{}}$ and  $\alh{}=\alpi{}/\thet{}{}=(\alh{1},\alh{2},\alh{3})$. Equation (\ref{eq:1}) follows because $\mathbb{E}[\alh{i}]=0$ and the random variables $\alh{i}$, $\thet{}{}$ are independent. Equation (\ref{eq:2}) follows because the  probability density function (pdf) of $\thet{}{}$ is \cite[eq. (3.195)]{Shynk2013}
 \begin{myequation}
  f_{\thet{}{}}(\thet{}{})=\frac{1}{\sigma_p^3} \sqrt{\frac{2}{\pi}}\thet{}{}^2\exp\Big(-\frac{\thet{}{}^2}{2\sigma_p^2}\Big), \tag{S11}
\end{myequation}
for $\thet{}{}\ge 0$.

The ACF of $\by{k}$ at time separation $l\ge 0$ for a constant input $\bx{}$ is
\begin{align*}
\autocorr{(k, k+l)}{\by{}} &= \mathbb{E}[\by{k}^{\rmH}\by{k+l}] \\
            &= \bx{}^{\rmH} \mathbb{E}[\bR{k}^{\rmH} \bR{k+l}]\bx{} \\
            &= \bx{}^{\rmH} \mathbb{E}[\bR{k}^{\rmH} \bRin{k+l}\dots \bRin{k+1} \bR{k}]\bx{} \\
            &= \bx{}^{\rmH} \mathbb{E}[\bRin{}]^l\mathbb{E}[\bR{k}^{\rmH} \bR{k}]\bx{} \tag{S12a} \label{eq:s2.5}\\
            &= \bx{}^{\rmH} \mathbb{E}[\bRin{}]^l\bx{}. \tag{S12b} \label{eq:s3}
\end{align*}
In equation (\ref{eq:s2.5}) we used the fact that the expectation of the innovation matrix is a scaled identity matrix (equation (\ref{eq:2})) that  commutes with $\bR{k}^{\rmH}$, and the fact that the innovation matrices $\bRin{k}$ are independent.

Using equation (\ref{eq:2}) in equation (\ref{eq:s3}), the ACF can be expressed as
\begin{align*}
  \autocorr{(l)}{\by{}} & = \bx{}^{\rmH} \Bigg(\bigg((1-\sigma_p^2)\exp\Big(-\frac{\sigma_p^2}{2}\Big)\bigg) \mathbf{I}_2\Bigg)^{{l}}\bx{} \\
        & = \norm{\bx{}}^2\bigg((1-\sigma_p^2)\exp\Big(-\frac{\sigma_p^2}{2}\Big)\bigg)^{\abs{l}}. \tag{S13} \label{eq:s4}
\end{align*}
For symmetry reasons, the absolute value of $\abs{l}$ replaced $l$ in equation (\ref{eq:s4}), making the expression valid for negative $l$ as well.

\end{document}